%% 
%% Copyright 2007-2024 Elsevier Ltd
%% 
%% This file is part of the 'Elsarticle Bundle'.
%% ---------------------------------------------
%% 
%% It may be distributed under the conditions of the LaTeX Project Public
%% License, either version 1.3 of this license or (at your option) any
%% later version.  The latest version of this license is in
%%    http://www.latex-project.org/lppl.txt
%% and version 1.3 or later is part of all distributions of LaTeX
%% version 1999/12/01 or later.
%% 
%% The list of all files belonging to the 'Elsarticle Bundle' is
%% given in the file `manifest.txt'.
%% 
%% Template article for Elsevier's document class `elsarticle'
%% with harvard style bibliographic references

% \documentclass[preprint,12pt,authoryear]{elsarticle}

%% Use the option review to obtain double line spacing
%% \documentclass[authoryear,preprint,review,12pt]{elsarticle}

%% Use the options 1p,twocolumn; 3p; 3p,twocolumn; 5p; or 5p,twocolumn
%% for a journal layout:

\documentclass[final,2p,times,numafflabel]{elsarticle}
\date{}
\usepackage[utf8]{inputenc}

%% \documentclass[final,3p,times,twocolumn,authoryear]{elsarticle}
%% \documentclass[final,5p,times,authoryear]{elsarticle}
%% \documentclass[final,5p,times,twocolumn,authoryear]{elsarticle}

%% For including figures, graphicx.sty has been loaded in
%% elsarticle.cls. If you prefer to use the old commands
%% please give \usepackage{epsfig}

%% The amssymb package provides various useful mathematical symbols
\usepackage{amssymb}
%% The amsmath package provides various useful equation environments.
\usepackage{amsmath,amsthm}
\usepackage{amsfonts}
\usepackage{graphicx}
\usepackage[colorlinks=false, allcolors=blue,hidelinks]{hyperref}
\newcommand\unaryminus{\smash{\scalebox{0.35}[1.0]{\( - \)}}}
\newtheorem{theorem}{Theorem}

\theoremstyle{definition}
\newtheorem{definition}{Definition}

\newtheorem{observation}{Observation}

\newcommand{\cav}{{\large{\textsc{cms}}}}

\usepackage{enumitem}

\usepackage[table,xcdraw]{xcolor}
\usepackage{todonotes}
\usepackage{physics}
\usepackage{amsmath}
\usepackage{tikz}
\usepackage{mathdots}
\usepackage{yhmath}
\usepackage{cancel}
\usepackage{color}
\usepackage{siunitx}
\usepackage{array}
\usepackage{multirow}
\usepackage{amssymb}
\usepackage{tabularx}
\usepackage{booktabs}
\usetikzlibrary{fadings}
\usetikzlibrary{patterns}
\usetikzlibrary{shadows.blur}
\usetikzlibrary{shapes}
\usepackage{adjustbox}
\usepackage{multicol}
\usepackage{cleveref}[2012/02/15]
\Crefname{observation}{Observation}{Observations}

\crefformat{footnote}{#2\footnotemark[#1]#3}

\let\ACMmaketitle=\maketitle
\renewcommand{\maketitle}{\begingroup\let\footnote=\thanks \ACMmaketitle\endgroup}

\usepackage{bold-extra}

%% The amsthm package provides extended theorem environments
%% \usepackage{amsthm}

%% The lineno packages adds line numbers. Start line numbering with
%% \begin{linenumbers}, end it with \end{linenumbers}. Or switch it on
%% for the whole article with \linenumbers.
%% \usepackage{lineno}

\journal{the Special Issue on Economics and Computation, in Information Processing Letters.}

\begin{document}

\begin{frontmatter}

%% Title, authors and addresses

%% use the tnoteref command within \title for footnotes;
%% use the tnotetext command for theassociated footnote;
%% use the fnref command within \author or \affiliation for footnotes;
%% use the fntext command for theassociated footnote;
%% use the corref command within \author for corresponding author footnotes;
%% use the cortext command for theassociated footnote;
%% use the ead command for the email address,
%% and the form \ead[url] for the home page:
%% \title{Title\tnoteref{label1}}
%% \tnotetext[label1]{}
%% \author{Name\corref{cor1}\fnref{label2}}
%% \ead{email address}
%% \ead[url]{home page}
%% \fntext[label2]{}
%% \cortext[cor1]{}
%% \affiliation{organization={},
%%            addressline={}, 
%%            city={},
%%            postcode={}, 
%%            state={},
%%            country={}}
%% \fntext[label3]{}

\title{On the Tractability Landscape of\\ the Conditional Minisum Approval Voting Rule} %% Article title

%% use optional labels to link authors explicitly to addresses:
%% \author[label1,label2]{}
%% \affiliation[label1]{organization={},
%%             addressline={},
%%             city={},
%%             postcode={},
%%             state={},
%%             country={}}
%%
%% \affiliation[label2]{organization={},
%%             addressline={},
%%             city={},
%%             postcode={},
%%             state={},
%%             country={}}

\author[1]{Georgios Amanatidis}
\author[2]{Michael Lampis}
\author[1]{\\Evangelos Markakis}
\author[3]{Georgios Papasotiropoulos}

\affiliation[1]{organization={Athens University of Economics and Business; Archimedes / Athena RC, Athens}, country={Greece}}
\affiliation[2]{organization={LAMSADE, CNRS, Université Paris Dauphine -- PSL, Paris}, country={France}}
\affiliation[3]{organization={University of Warsaw, Warsaw}, country={Poland}}
%% Author affiliation
% \affiliation{organization={XX},%Department and Organization
%             % addressline={}, 
%             % city={},
%             % postcode={}, 
%             % state={},
%             country={Y}}

%% Abstract
\begin{abstract}
{\small{This work examines the Conditional Approval Framework for elections involving multiple interdependent issues, specifically focusing on the Conditional Minisum Approval Voting Rule. We first conduct a detailed analysis of the computational complexity of this rule, demonstrating that no approach can significantly outperform the brute-force algorithm under common computational complexity assumptions and various natural input restrictions. In response, we propose two practical restrictions (the first in the literature) that make the problem computationally tractable and show that these restrictions are essentially tight. Overall, this work provides a clear picture of the tractability landscape of the problem, contributing to a comprehensive understanding of the complications introduced by conditional ballots and indicating that conditional approval voting can be applied in practice, albeit under specific conditions.
}}
\end{abstract}

%% Keywords
% \begin{keyword}
%% keywords here, in the form: keyword \sep keyword

%% PACS codes here, in the form: \PACS code \sep code

%% MSC codes here, in the form: \MSC code \sep code
%% or \MSC[2008] code \sep code (2000 is the default)

% \end{keyword}

\end{frontmatter}

\section{Introduction}

From everyday scenarios to political elections and from recommendation systems to information retrieval, the topics at hand are often interconnected. Alvin, actively participating in a participatory budgeting election, supports the implementation of at least one public project in his neighborhood, without specific liking. Betty, voting for MPs, believes that it is crucial to elect at least one representative from a particular minority group. Charlie prefers his favorite music app to suggest songs of a different style each day. Daphne wants her news search results to include perspectives from both radical and moderate viewpoints. Elmer prefers his social media feed to prioritize international content only after he has seen enough local updates. Fred, planning his next vacation with friends, is open to either a short trip or an overseas adventure, but prefers to avoid brief overseas trips. In such contexts, it is imperative to use voting frameworks that balance expressiveness and conciseness. Voters shouldn't have to detail all their preferences extensively, but the input must be expressive enough to reflect voters' true preferences towards yielding satisfactory outcomes. The Conditional Approval Framework, introduced by Barrot and Lang \cite{bl16}, aims at combining these qualities which makes it well-suited for multi-issue elections where voters' preferences on one issue depend on the outcome of others. 

The Conditional Approval Framework has lately received significant attention, leading to several works that build upon it. 
Specifically, Barrot and Lang \cite{bl16} proposed and axiomatically studied two voting methods within this framework, representing the utilitarian and egalitarian approaches, respectively. The former, the minisum method, or \textit{Conditional Minisum Approval Voting Rule} (\cav), extends classic approval voting to conditional multi-issue elections. This will be the focus of our work. Later research by Markakis and Papasotiropoulos \cite{mp20} focused on designing approximation algorithms to tackle the computational difficulty of the rule, which was already identified in \cite{bl16}. Notably, in a follow-up work, Markakis and Papasotiropoulos \cite{mp21} established a condition for polynomial-time computation of \cav. A complete analysis of the complexity of strategic control problems under the rule was also provided in \cite{mp21}. Additionally, Brill et al.~\cite{bmpp} examined proportionality questions around the framework, shifting focus from the minisum method to adaptations of proportional rules for the conditional setting. Specifically, they explored the conditional analogs of the \textit{Proportional Approval Voting} rule (PAV) \cite{Thie95a,Jans16a} and the \textit{Method of Equal Shares} (MES) \cite{peters2020proportionality} from the perspectives of axiomatic satisfiability and computational complexity. For an overview of conceptually similar frameworks and approaches, albeit not for the approval setting, we refer to the classic survey of Lang and Xia \cite{survey}. 

The conditional approval framework is promising and has the potential to improve upon the widely-used standard approval voting setting for multi-issue elections. Conditional approval voting offers greater expressiveness for voters' ballots while still accommodating those who would prefer the simplicity of classic approval voting. However, a major challenge lies in the computational complexity of determining the winner, hindering widespread adoption of the framework. Unlike many cases in Computational Social Choice where intractability stems from the complicated nature of a voting rule itself, here it arises from the vast number of potential outcomes.
Hence, even under the simplest and most natural rule, (the conditional analog of) approval voting rule, the winner determination problem is NP-hard \cite{bl16}. Driven by this, our focus returns to the approval voting rule (also referred to as the minisum solution) \cav, which is perceived as the foundational voting mechanism within the framework. We address the most fundamental computational problem related to the rule: determining the optimal outcome under \cav\ given voters' conditional preferences. In our work we provide a clean picture of the tractability landscape, contributing towards the comprehensive understanding of the intricacies that conditional ballots introduce.

\subsection{Our Contribution}
This paper aims at providing a clear indication that conditional approval voting can be applied in practice, albeit under specific conditions. The findings from the first half of our work highlight the computational limitations of the conditional setting, together with the necessity of concrete ballot restrictions to achieve efficient winner determination algorithms under the approval voting rule. Then, the second half demonstrates that certain restrictions can indeed be effectively established without completely compromising the framework's advantages. Overall, our study provides a pathway for applying conditional approval voting in actual elections, allowing for the consideration of issues that are interdependent according to the viewpoints of the voters.

More precisely, the first part of our work (\Cref{sec:hardness}) can be interpreted as a critique of the conditional framework, revealing that its greatest strength—voters' ample power of expressivity—also turns into its primary weakness: severe computational intractability. In particular, we provide a set of results that strengthen the existing NP-hardness by mainly demonstrating that finding the optimal outcome under \cav\ is not only computationally difficult but also that no algorithm exists that would solve the problem significantly faster than a brute-force exhaustive search of all possible outcomes. Importantly, the hardness holds even in very restricted families of instances. Our findings highlight that despite the natural appeal of the expressiveness of the conditional framework and the simplicity of the utilitarian (i.e., minisum) approach, unfortunately, \cav\ is deemed impractical for real-world usage, at least without substantial restrictions.

As a response to the aforementioned hardness, in the second part of the work (\Cref{sec:pos}) we propose concrete restrictions, which make the computation of the rule's outcome feasible in polynomial time. We argue that these restrictions are well-motivated and suitable for real elections and that are not overly stringent, allowing the conditional framework to continue offer greater expressive power than the---most commonly used---unconditional approval setting. We propose restrictions in two (orthogonal) directions: limiting the types of approval ballots (in \Cref{subsec:pos1}) and constraining the dependencies declared by the voters (in \Cref{subsec:pos2}). Notably, these restrictions can be easily enforced by election organizers. This stands in sharp contrast to the only known positive result for the rule in the literature: the polynomial algorithm suggested in \cite{mp21} pertains to instances where the union of voters' dependency graphs yields a graph of bounded treewidth, which, while theoretically intriguing, lacks practical applicability as such a restriction cannot be directly applied to the individual ballots per se. Our positive results hold for instances where conditional ballots are \textit{group-dichotomous}, or when the dependency graph of each voter has a \textit{bounded vertex cover number}. At a high level, group-dichotomous ballots express preferences where voters specify groups of issues they prefer to implement altogether or exclude from implementation altogether.
Vertex cover number is a structural graph parameter that is widely studied in parameterized algorithmics, but, crucially, in the literature of voting our work is among the first that leverages it towards providing tractability results. Finally, we highlight that
our positive findings are essentially tight.

This work examines the feasibility of conditional approval voting (CAV) under computational constraints, aiming to identify when and how it can be practically applied in elections. The first part highlights a fundamental challenge: while CAV allows voters more nuanced preferences by permitting interdependent issues, this flexibility leads to severe computational intractability. Specifically, the paper demonstrates that determining the optimal outcome under CAV is prohibitively hard, even with restricted instances. In response, the second part introduces practical limitations on voter ballots and dependency relationships that ensure polynomial-time computation of results without significantly sacrificing expressivity. By proposing constraints like group-dichotomous ballots and limiting dependency graph structures, this study offers solutions to make CAV more viable for real-world use.

\section{Preliminaries}
\label{sec:prelims}

Consider a collection of $m$ issues $I = \{I_1, \dots, I_m\}$, with each issue $I_j$ being associated with a \textit{domain}, which is a finite set $D_j$ of possible alternatives. Let $V = \{1, \dots, n\}$ denote a group of voters. We consider elections where voters in $V$ should make a decision on the issues of $I$. An \textit{outcome of an election} (or simply \textit{outcome}) is defined as a specific assignment of an alternative from $D_j$ to each issue $I_j$ from $I$. The specified value for an issue under an outcome will be called \textit{the outcome of the issue}. We say that $D = D_1 \times D_2 \times \dots \times D_m$ represents the set of all possible election outcomes and we use $d$ to denote the maximum domain size of the issues, i.e., $d:=\max_{i\in [m]}|D_i|$.

\subsection{Voting Format} \label{subsec:format}
In our work we follow the framework suggested in \cite{bl16} and examined in \cite{mp20,mp21,bmpp} adhering closely to their notation; we refer to those works for illustrating examples. Each voter $i \in [n]$ is associated with a directed graph $G_i = (I, E_i)$, known as the \textit{voter's dependency graph}, where the vertices correspond to the set of issues of the considered election and a directed edge $(I_k, I_j)$ belonging to the edge-set $E_i,$ implies that according to $i$ the outcome of issue $I_j$ is dependent on the outcome of issue $I_k$. Furthermore, let $N^{\unaryminus}_{i}(I_j)$ represent the (possibly empty) set of immediate in-neighbors of issue $I_j$ in $G_i$. An essential parameter for what follows is the maximum in-degree of the dependency graph $G_i$ of a voter $i$, denoted as $\Delta_i^{\unaryminus} := \max_{j \in [m]}\{|N^{\unaryminus}_{i}(I_j)|\}$. Say that $\Delta$ is the maximum in-degree among all voters' dependency graphs, so, $\Delta:=\max_{i \in [n]}\Delta_i^{\unaryminus}$. We could assume that $G_i$ is given either explicitly by voter $i$ when casting a ballot or is implicitly inferred from the ballot itself.

We will now outline the process by which voters submit their preferences before moving to the formal definition of the ballot format. For any issue $I_j$ with no in-neighbors in $G_i$ (i.e., its in-degree is zero), voter $i$ submits an unconditional approval ballot, equivalent to classic approval ballot, indicating all approved alternatives in $D_j$. If issue $I_j$ has one or more in-neighbors in $G_i$, let $N^{\unaryminus}_i(I_j)=\{I_{j_1}, I_{j_2}, \dots, I_{j_k} \}$. In this case, voter $i$ must list all the approved combinations in the form $\{\texttt{Pre}(I_j):A(I_j)\}$, where $A(I_j) \subseteq D_j$ and $\texttt{Pre}(I_j) \in D_{j_1} \times D_{j_2} \times \dots \times D_{j_k}$. Each combination $\{\texttt{Pre}(I_j):A(I_j)\}$ submitted by the voter $i$, signifies that the voter is satisfied with respect to issue $I_j$, provided the election outcome includes all alternatives in $\texttt{Pre}(I_j)$ (to be called a \textit{premise} for $I_j$) as well as an alternative from the \textit{approval set} $A(I_j)$ for $I_j$. Both scenarios, whether an issue has zero or positive in-degree, are unified in the following definition of \textit{conditional approval ballots}.

\begin{definition}
\label{def:conditionals}
	A conditional approval ballot of a voter $i$ over issues $I=\{I_1,\dots,I_m\}$ with domains $D_1,\dots, D_m$ respectively, is a pair $B_i=\langle G_i,\{C_{ij}, j\in [m]\} \rangle$, where $G_i$ is the dependency graph of voter $i$, and for each issue $I_j$, $C_{ij}$ is a set of conditional approval statements in the form $\{\texttt{Pre}(I_j):A(I_j)\},$ with $\texttt{Pre}(I_j)\in \prod_{k:I_k\in N^{\unaryminus}_{i}(I_j)} D_k$, and $A(I_j) \subseteq D_j$. 
\end{definition}

For any issue $I_j$ with an in-degree of zero as per some voter $i$, an approval vote for an alternative $\alpha_j \in D_j$ will simply be denoted as $\{\alpha_j\}$ (instead of $\{\emptyset: \alpha_j\}$). Furthermore, crucially, a voter is not obliged to approve an alternative in $D_j$ for every possible combination of values in $\prod_{k:I_k\in N^{\unaryminus}_{i}(I_j)} D_k$. As a result, we consider such a voter, say $i$, as a priori dissatisfied with respect to $I_j$ if the outcome of the issues in $N^{\unaryminus}_i(I_j)$ do not appear in the premise of the conditional ballot regarding issue $I_j$. 

Given a voter $i$ with a conditional ballot $B_i$, $B_i^j$ denotes the restriction of their ballot to issue $I_j$. Furthermore, a \textit{conditional approval voting profile} is represented by the tuple $P = (I, D, V, B)$, where $B = (B_1, B_2, \dots, B_n)$. We denote by $|P|$ the size of such a profile. Finally we define the \textit{global dependency graph} of a conditional approval voting profile as the undirected (simple) graph obtained by disregarding the direction (and multiplicities) of edges in the graph $(I, \bigcup_{i \in [n]} E_i)$, where $E_i$ represents the edge set of the dependency graph of voter $i$.

\subsection{Voting Rule} Our work is on the exploration of the generalization of the approval voting rule within the framework of conditional approval voting. We begin by introducing a measure for the dissatisfaction of a voter given an assignment of values to all issues. This measure is based on a generalized notion of the Hamming distance.

\begin{definition}
Given an outcome $r \in D$, we say that voter $i$ is dissatisfied (or disagrees) with the outcome of the issue $I_j$, say $r_j$, if for the projection of $r$ on $N^{\unaryminus}_i(I_j)$, say $t$, it holds that $\{t : r_j\}\notin B_i^j$, and is satisfied otherwise. We denote as $\delta_i(r)$ the total number of issues that dissatisfy voter $i$ under $r$.
\end{definition}

Given a conditional approval voting profile, \textit{Conditional Minisum Approval Voting rule} (\cav) outputs the outcome that minimizes the total number of disagreements over all voters. Equivalently for the objectives of our study, the outcome of an election under \cav\ maximizes total voters' satisfaction. 
If the global dependency graph of an instance is empty, i.e., $\Delta=0$, then the election degenerates to the approval rule in approval voting over multiple independent issues. We assume that the minimum degree in the global dependency graph is $1$, as issues that neither depend on nor affect other issues can be resolved by majority rule, regardless of the outcomes of the rest. To simplify notation, we will use \cav\ to refer both to the voting rule and to the related algorithmic problem; the exact meaning will always be clear from the context.
Formally, the algorithmic problem that we study is as follows. 

\begin{table}[h!]
	\centering
	\begin{tabular}{lp{0.65\columnwidth}}  
		\toprule
	 \multicolumn{2}{c}{\textsc{conditional minisum approval voting rule (\cav)} } \\
		\midrule
\textbf{Given:} & A voting profile $P$ with $m$ issues and $n$ voters casting conditional approval ballots and an integer $s$.\\
\textbf{Output:} & An outcome $r^* = (r_1^*,\dots, r_m^*)$ to all issues that achieves $\sum_{i \in [n]} \delta_i(r^*)\leq s$.\\
		\bottomrule
	\end{tabular}
\end{table}
\vspace{0.2cm}

Subsequently, we will make use of classic notions from the parameterized algorithms and complexity literature, such as the treewidth, pathwidth and vertex cover number of a graph. For the relevant definitions, we refer the reader to \cite{parameterized_book}.

\section{Intractability Results}
\label{sec:hardness}

We begin the section by noting that there is an obvious brute-force approach which optimally solves \cav\ by examining the dissatisfaction each possible outcome produces in the electorate and selecting the one that minimizes this value. Importantly, even though this result is trivial, it already underscores that \cav\ is a suitable choice for elections over a few interdependent issues. No better algorithm for the problem is known in the literature, with respect to worst-case running time complexity.

\begin{observation}
\label{obs:brute-force}
\cav\ is trivially solvable in time $O(d^m\cdot |P|^{O(1)})$.
\end{observation}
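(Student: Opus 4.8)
The plan is to implement the exhaustive search literally and bound its cost. First I would iterate over every element $r=(r_1,\dots,r_m)$ of the outcome space $D=D_1\times\cdots\times D_m$; since each $|D_j|\le d$, there are at most $d^m$ such tuples, and they can be enumerated one after another with only polynomial overhead per step.

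Next, for a fixed candidate outcome $r$, I would compute the objective $\sum_{i\in[n]}\delta_i(r)$ directly from the definition of dissatisfaction. For each voter $i$ and each issue $I_j$, I form the projection $t$ of $r$ onto the in-neighbourhood $N^{\unaryminus}_{i}(I_j)$ (simply reading off the coordinates of $r$ indexed by that neighbourhood) and test whether the statement $\{t:r_j\}$ occurs among the conditional approval statements in $C_{ij}$ recorded in the ballot $B_i$ (equivalently, whether $\{t:r_j\}\in B_i^j$). Each such test is a lookup over the list of statements voter $i$ actually submitted, hence runs in time polynomial in $|P|$; summing the at most $n\cdot m$ resulting indicator values gives $\sum_{i\in[n]}\delta_i(r)$ in time $|P|^{O(1)}$.

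Finally I would maintain a running minimum of this objective over all outcomes examined so far, together with a witnessing assignment $r^*$, and once the loop terminates compare the stored minimum against the threshold $s$, outputting $r^*$ if $\sum_{i\in[n]}\delta_i(r^*)\le s$ and reporting that no such outcome exists otherwise. Multiplying the number of iterations, at most $d^m$, by the per-iteration cost $|P|^{O(1)}$ yields the claimed $O(d^m\cdot|P|^{O(1)})$ running time.

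There is essentially no technical obstacle in this argument; the only point worth emphasising is what the bound does \emph{not} give — it is a genuinely exponential-time procedure whenever $d^m$ is superpolynomial in $|P|$, and in particular it is not a polynomial-time algorithm. This is precisely the gap that the rest of \Cref{sec:hardness} addresses, by showing that under standard complexity assumptions one cannot improve substantially on this brute-force bound.
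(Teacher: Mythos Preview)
Your proposal is correct and matches the paper's approach exactly: the paper does not even spell out a proof of this observation, merely noting that one examines the dissatisfaction produced by each of the at most $d^m$ possible outcomes and selects the minimiser, which is precisely the exhaustive search you describe.
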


 In this section, we demonstrate that an algorithm with a worst-case complexity significantly better than the trivial one is unattainable, under common computational complexity assumptions. We begin with a strong negative result, showing that it is hard to find an algorithm that runs in time whose exponential dependence on $m$ is even slightly reduced, for general instances. The construction in this proof is interesting in its own right as it reveals a key insight: that the computational difficulty of \cav\ stems from the same source as its power, namely the expressiveness of the ballots. Specifically, it shows that the broad definition of the setting allows voters to submit ballots that could in principle encode the constraints of computationally hard problems. 
 
 The proof of the hardness relies on the Strong Exponential Time Hypothesis (SETH) \cite{ImpagliazzoP01} which, roughly, states that for all $\epsilon>0$ there exists a $k$ such
that $k$-\textsc{sat} cannot be solved in time $O^*((2-\epsilon)^\nu)$, where $\nu$ is
the number of variables of the $k$-\textsc{sat} instance and $O^*(\cdot)$ suppresses factors polynomial in the input size. The SETH along with its weakening ETH (which roughly speaking says that 3-\textsc{sat}
cannot be solved in time subexponential in the number of variables) are commonly used conjectures in parameterized algorithmics as they often allow for tight results (up to small factors) which can provide a (almost) complete understanding of the complexity of the studied problems \cite{parameterized_book}.

\begin{theorem} \label{thm:1}If there exists an $\epsilon>0$ such that \cav\ can be solved in
time $d^{(1-\epsilon)m}\cdot|P|^{O(1)}$, then the SETH is false. \end{theorem}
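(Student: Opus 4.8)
The plan is to establish the contrapositive by a fine-grained reduction from $k$-\textsc{sat}, parametrized uniformly over all $k$. The crucial design goal is that the reduction be \emph{exponent-preserving}: from a $k$-\textsc{sat} formula on $\nu$ variables I will build a \cav\ instance with exactly $m=\nu$ issues, each of domain size exactly $d=2$, so that $d^m=2^\nu$. An algorithm running in time $d^{(1-\epsilon)m}\cdot|P|^{O(1)}$ would then decide $k$-\textsc{sat} in time $O^*\big((2^{1-\epsilon})^\nu\big)$; since $2^{1-\epsilon}<2$ and this holds uniformly over all $k$, the SETH fails.

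For the construction I would introduce one issue $I_j$ with domain $\{0,1\}$ per variable $x_j$, so that an outcome is literally a truth assignment, and one voter per clause. Take a clause $C=\ell_1\vee\cdots\vee\ell_k$ over variables $x_{j_1},\dots,x_{j_k}$ and let $p=(p_1,\dots,p_k)\in\{0,1\}^k$ be the unique assignment to these variables falsifying all $k$ literals (we may discard tautological clauses). The voter for $C$ gets an in-star dependency graph with edges $(I_{j_1},I_{j_k}),\dots,(I_{j_{k-1}},I_{j_k})$; for every issue other than $I_{j_k}$ the voter casts the unconditional ballot approving both alternatives, so is trivially satisfied there. For $I_{j_k}$ the voter lists the statement $\{q:\{0,1\}\}$ for every premise $q\in\{0,1\}^{k-1}$ with $q\neq(p_1,\dots,p_{k-1})$, and the statement $\{(p_1,\dots,p_{k-1}):\{1-p_k\}\}$. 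Then this voter is dissatisfied with respect to $I_{j_k}$ under an outcome $r$ exactly when $r$ assigns $p$ to $x_{j_1},\dots,x_{j_k}$, i.e.\ exactly when $r$ violates $C$; hence $\delta_i(r)\in\{0,1\}$ is the indicator that clause $i$ is violated. Setting the threshold $s=0$, the \cav\ instance is a yes-instance precisely when some assignment violates no clause, i.e.\ when the formula is satisfiable; this equivalence, plus polynomial-time computability of the instance, is the correctness claim to verify.

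What remains is routine bookkeeping. Assuming (without loss of generality) no repeated clauses gives $M=O(\nu^k)$ clauses, so with ballots of size $O(\nu+2^{k})$ we get $|P|=\mathrm{poly}(\nu)$ for fixed $k$, and the reduction runs in polynomial time. Feeding it to the hypothetical algorithm yields a running time $2^{(1-\epsilon)\nu}\cdot\mathrm{poly}(\nu)=O^*\big((2-\epsilon')^\nu\big)$ with $\epsilon':=2-2^{1-\epsilon}>0$ independent of $k$, contradicting the SETH. I expect the only point requiring real care to be precisely this exponent accounting: one must keep both $d$ and $m$ from inflating — binary domains and one issue per variable are exactly what pin $d^m$ to $2^\nu$ on the nose — and make sure the $2^{k-1}$ conditional statements per clause-voter and the polynomially many voters are absorbed into the $O^*(\cdot)$ and $|P|^{O(1)}$ slack. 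Encoding a single clause by a single conditional ballot is otherwise immediate, since the general framework places no bound on in-degree (the resulting instance has $\Delta=k-1$).
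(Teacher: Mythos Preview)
Your argument is correct. The reduction is sound (one binary issue per variable, one voter per clause, in-star dependency with $2^{k-1}$ premises), the equivalence with $s=0$ is clean, and the exponent bookkeeping goes through: $d^{(1-\epsilon)m}=2^{(1-\epsilon)\nu}$, while $|P|=O(M\nu)$ is polynomial in the formula size for fixed $k$, so the $|P|^{O(1)}$ factor is absorbed by $O^*(\cdot)$ and you get $O^*((2-\epsilon')^\nu)$ with $\epsilon'=2-2^{1-\epsilon}$ independent of $k$.

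Your route is, however, genuinely different from the paper's. The paper goes to the opposite corner of the parameter space: it fixes $m=\lceil 3kc/\epsilon\rceil$ to be a \emph{constant}, partitions the $\nu$ variables into $m$ groups, and lets each issue's domain encode all $2^{|S_j|}\approx 2^{\nu/m}$ assignments to its group; voters (one per clause) then have ballots of size $d^k$. Both constructions hit $d^m\approx 2^\nu$, but the paper must carefully choose $m$ so that the exponential ballot size, and hence $|P|^c$, stays below $2^{\epsilon\nu/2}$. Your construction avoids this balancing act entirely by keeping $d=2$ and $|P|$ polynomial. What each buys: your version is more elementary and in fact yields the lower bound already for binary domains with constant-size ballots and $\Delta=k-1$; the paper's version, by contrast, produces instances with a constant number of issues and exponentially large ballots, which is precisely the point they exploit narratively --- it makes explicit that the hardness can be attributed to ballot expressiveness (premises ranging over $d^{m-1}$ values), motivating the subsequent restrictions in the paper.
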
 

\begin{proof}

Assume we have a \cav\ algorithm running in time $d^{(1-\epsilon)m}|P|^c$,
where $c$ is a constant. We will obtain a \textsc{sat} algorithm which solves
$k$-\textsc{sat} instances on $\nu$ variables, for all fixed $k$, in time $O(2^{(1-\frac{\epsilon}{2})\nu})$,
falsifying the SETH.

Suppose that we are given a $k$-\textsc{sat} formula $\phi$ with $\nu$ variables. We construct a \cav\ instance where the number of issues will be
$m=\lceil\frac{3kc}{\epsilon}\rceil$. In order to do so, we fix an arbitrary partition of the $\nu$
variables into $m$ sets, each of size at most $\lceil \frac{\nu}{m}\rceil$, call
these sets of variables $S_1,\ldots, S_m$. Intuitively, each set will correspond to an issue and the outcome we select
for issue $I_j$ will encode the assignment to the variables of $S_j$, for
$j\in[m]$. We set the domain of each issue $I_j$ to contain
$2^{|S_j|}$ distinct outcomes, each associated with an assignment to
the variables of $S_j$, therefore corresponding to all bit strings of length $O(\log|S_j|)$.

We proceed with specifying the ballots of the voters aiming to ensure that it is possible to satisfy every voter with respect to all issues if
and only if $\phi$ is satisfiable. For each clause $c_i$ of $\phi$ we will add a single voter, to be called $v_i$. We note that $c_i$ contains variables from $k'\le k$ distinct
sets, say $S_{i_1}, S_{i_2},\ldots, S_{i_{k'}}$. Fix an arbitrary $j\in\{i_1,\ldots,i_{k'}\}$ and consider any tuple of outcomes for the issues in $\{I_t: t \in \{i_1,\ldots,i_{k'}\}\setminus\{j\}\}$. For each such tuple we infer an
assignment that covers all variables of $c_i$ except those of $S_{j}$. If this
assignment already satisfies the clause $c_i$, then we add to the ballot of $v_i$ a
conditional approval statement, with prefix this tuple and with approval set
the set of all possible outcomes for issue $I_j$. Otherwise, if this assignment does not
already satisfy $c_i$, then we add to the ballot of $v_i$ a conditional approval
statement, with prefix this tuple and with approval set the set of all outcomes for
issue $I_j$ which represent an assignment that satisfies $c_i$ using variables of
$S_j$. Finally, $v_i$ is unconditionally satisfied with any alternative of any other issue apart from $I_j$.
We also set $s=0$.

For the forward direction, we argue that if $\phi$ is satisfiable, then there is an outcome that
satisfies every voter with respect to all issues. Indeed, for each issue $I_j$, we can select the outcome that
corresponds to the value given to $S_j$ by the satisfying assignment of $\phi$. Consider
now $v_i$, let $\{i_1,\ldots,i_{k'}\}$ be
the indices of sets of variables that appear in $c_i$, as above, and consider an issue
$I_j\in\{i_1,\ldots,i_{k'}\}$. By the construction, the voter is satisfied for all the rest of the issues. Since the ballot of that voter includes a conditional approval statement for every tuple
of outcomes for issues $\{i_1,\ldots,i_{k'}\}\setminus\{j\}$, there exists a
conditional approval statement whose premise is satisfied in the selected outcome. If $c_i$ is
satisfied by some variable outside of $S_j$, then the conditional approval
statement allows any outcome for issue $I_j$, so the voter is already satisfied. If not,
then the assignment must satisfy $c_i$ via the values of variables in $S_j$, so
the voter is again satisfied by the selected outcome, with respect to issue $I_j$.

For the reverse direction, we begin by extracting an assignment to the formula $\phi$ from an
outcome that satisfies every voter with respect to all issues. To see that this assignment satisfies $\phi$, suppose the contrary, that clause $c_i$ has all its literals set to False. Then, $v_i$ would not be satisfied with respect to the only issue $I_j$ of interest; a contradiction.

Finally, we will show that the claimed running time is sufficient to falsify the SETH. The
number of outcomes we consider for each issue is at most $d=2^{|S_j|} \le
2^{\frac{\nu}{m}+1} \le 2\cdot 2^{\frac{\epsilon \nu}{3ck}}$. Furthermore, to represent the ballot of each voter we needed a size of $d^k=O(2^{\frac{\epsilon
\nu}{3c}})$, because each voter is interested in only at most $k$ issues, each of domain $d$. We also
note that the input formula can be assumed to have at most $O(\nu^k)$ clauses.
Hence $|P| =  O((\nu d)^k)$ and the $|P|^c$ factor in the running time of the claimed algorithm for \cav\ can be
upper-bounded by $2^{\frac{\epsilon \nu}{2}}$, assuming $\nu$ is
sufficiently large. On the other hand, the $d^{(1-\epsilon)m}$ factor in the
running time is at most $2^{(\frac{\nu}{m}+1)(1-\epsilon)m} =
O(2^{(1-\epsilon)\nu})$, because $m$ is a constant as it depends only on
$k,c,\epsilon$. Hence, the given $k$-\textsc{sat} instance can now be solved, in total running time of at most
$2^{(1-\frac{\epsilon}{2})\nu}$.
\end{proof}

The proof of \Cref{thm:1} has made evident the fact that \Cref{def:conditionals} does not impose restrictions on the size of the ballots cast by the voters. Consequently, these ballots could potentially be exponentially large due to the premise of an issue that can in principle take \( d^{m-1} \) different values. Therefore, to achieve polynomial computation for \cav, it is crucial to require voters to submit their ballots concisely.
Nevertheless, our subsequent hardness results demonstrate that various interpretations of “conciseness" alone cannot lead to polynomial-time solvability.

\Cref{thm:1} presents a very strong negative result in terms of the running time and, moreover, its proof highlights that the power of expressiveness the framework offers to the voters, as one would expect, comes at the cost of computational intractability. Despite this, the hope for the polynomial-time solvability of \cav\ in realistic instances is not yet lost. While significant, the result of \Cref{thm:1} seems primarily of theoretical interest, as the voters' ballots in the constructed instance are quite complex and it is unlikely to be encountered in practice. Additionally, the hardness result is based on the Strong Exponential Time Hypothesis (SETH), and a result based on the more widely accepted Exponential Time Hypothesis (ETH) would be more compelling. The following theorem, although weaker in terms of the running time bound---fortunately just marginal---not only holds under ETH but also demonstrates hardness using significantly simpler voter ballots.

\begin{theorem}
\label{thm:o(m)} There is no
$f(m)d^{o(m)}\cdot|P|^{O(1)}$-time algorithm for \cav, for any computable function $f$, assuming the Exponential Time Hypothesis, even if $\Delta=2$, and in particular if the dependency graph of each voter who has conditional preferences is an (in-)star on three vertices.
\end{theorem}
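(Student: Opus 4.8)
The plan is to reduce from \textsc{3-sat}, which by the ETH together with the Sparsification Lemma cannot be solved in time $2^{o(N)}$ even on instances with $N$ variables and $O(N)$ clauses. The gadget is the width-$3$ specialisation of the construction in the proof of \Cref{thm:1}, with one crucial difference: the number of issues $m$ is now a free ``granularity'' parameter rather than a constant fixed by the reduction. This is also what makes the bound essentially match the brute force of \Cref{obs:brute-force}.

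Given a formula $\phi$ and a target value of $m$, first partition the $N$ variables into $m$ groups $S_1,\dots,S_m$, each of size at most $\lceil N/m\rceil$, and create one issue $I_j$ per group whose domain $D_j$ is the set of the $2^{|S_j|}$ truth assignments to $S_j$; hence $d=2^{\lceil N/m\rceil}$. For every clause $C$ (which we may assume has three distinct variables) create a voter $v_C$ as follows: pick a \emph{centre} issue $I_c$ that contains a variable $z$ of $C$, and designate as the two in-neighbours of $I_c$ the issues containing the other two variables of $C$, reusing arbitrary other issues as dummy in-neighbours whenever $C$ touches fewer than three groups, so that the dependency graph of $v_C$ is an in-star on three vertices and $\Delta=2$. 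For every pair $(\alpha,\beta)$ of values of the two in-neighbours, add to $v_C$ a conditional statement with premise $(\alpha,\beta)$: if the partial assignment induced by $(\alpha,\beta)$ already satisfies $C$, let its approval set be all of $D_c$; otherwise let it be the (nonempty) set of $\gamma\in D_c$ that set $z$ so that $C$ holds. Voter $v_C$ approves every alternative of every other issue unconditionally, and set $s=0$.

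Correctness is then essentially verbatim from \Cref{thm:1}. If $\phi$ is satisfiable, assign to each $I_j$ the projection onto $S_j$ of a satisfying assignment; for each $v_C$ the premise realised by this outcome is among those listed, and its approval set contains the chosen value of $I_c$ precisely because $C$ is satisfied, so $v_C$ is satisfied with respect to every issue and the total dissatisfaction is $0=s$. Conversely, from an outcome of total dissatisfaction $0$ read off a truth assignment group by group; were some clause $C$ falsified, $v_C$ would be dissatisfied with respect to $I_c$, a contradiction. For the parameters, the profile has $m$ issues and $O(N)$ voters, each ballot consists of at most $d^2$ conditional statements with approval sets of size at most $d$, so $|P|=2^{O(N/m)}\cdot N^{O(1)}$, and the whole reduction runs in time $2^{O(N/m)}\cdot N^{O(1)}$.

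It remains to turn this into the claimed running-time lower bound. Write the running time of a hypothetical $f(m)\,d^{o(m)}\,|P|^{O(1)}$ algorithm as $f(m)\,d^{h(m)}\,|P|^{c}$ with $h(m)/m\to 0$; running it on the instance above decides $\phi$ in time $f(m)\cdot 2^{h(m)\lceil N/m\rceil}\cdot(2^{O(N/m)}N^{O(1)})^{c}=2^{N(h(m)/m+O(1/m))}\cdot N^{O(1)}$ for every fixed $m$, since then $f(m)$ and $2^{h(m)}$ are constants. Given any $\epsilon>0$, choosing $m$ to be a constant large enough that $h(m)/m+O(1/m)<\epsilon$ yields a $2^{\epsilon N}\cdot N^{O(1)}$-time algorithm for \textsc{3-sat}, so the ETH fails. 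I expect the only genuinely delicate point to be exactly this last step: one must argue that permitting an \emph{arbitrary} computable factor $f(m)$ and an arbitrary polynomial dependence on $|P|$ still does not help, and the reason is that the granularity $m$ can be pushed up to a large constant (depending only on the fixed algorithm and on the target exponent $\epsilon$), which forces the coefficient of $N$ in the exponent to zero while keeping $f(m)$ a constant. Everything else --- the gadget, the size analysis, and the correctness argument --- is a routine adaptation of \Cref{thm:1} to clauses of width three.
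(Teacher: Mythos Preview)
Your proof is correct, but the paper takes a different route: it reduces from $k$-\textsc{multicolored clique}, for which the bound $f(k)\,c^{o(k)}$ under ETH is already known. With one issue per color class (domain size $c$) plus a single binary ``special'' issue $I_s$ that serves as the common center of every voter's in-star, and one voter per unordered pair of colors (approving $\texttt{P}$ for $I_s$ conditioned on the two color-issues taking values that form an edge), the parameters line up exactly ($m=k+1$, $d=c$), so the ETH lower bound transfers verbatim with no endgame. Your approach, by contrast, goes straight from sparse 3-\textsc{sat} and is essentially the width-$3$ specialization of the \Cref{thm:1} construction with $m$ left free; the price is precisely the step you flag as delicate---pushing $m$ to a large constant (depending on $\epsilon$, $c$, and $h$) to kill both the $h(m)/m$ and the $|P|^c$ contributions---which the paper sidesteps by letting \textsc{multicolored clique} absorb that work. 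On the plus side, your argument is more self-contained (no auxiliary hard problem beyond \textsc{sat}), and note that the Sparsification Lemma is not actually needed: a polynomial number of clauses is already swallowed by the $|P|^{O(1)}$ factor. One structural bonus of the paper's construction worth noting is that all voters' in-stars share the \emph{same} center $I_s$, so even the global dependency graph is a single in-star; in your reduction the centers vary from clause to clause.
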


\begin{proof}
In $k$-\textsc{multicolored clique}, we are given an undirected graph $\mathcal{G}=(\mathcal{V},\mathcal{E})$ and a set of $k$ color classes for the vertices of $\mathcal{G}$. It is assumed that there are exactly $c$ vertices of each color, or otherwise one can extend a color class by adding isolated vertices of that color. The question is whether there exists a subgraph of $\mathcal{G}$ on $k$ vertices, each of a different color, that forms a clique. It is known that $k$-\textsc{multicolored clique} cannot be decided in time $f(k)c^{o(k)}$ for any computable function $f$, under ETH \cite{chen2006strong,parameterized_book}.

Given an instance $\Pi$ of $k$-\textsc{multicolored clique}, we create an instance $\Pi'$ of \cav. 
The construction will be such that the number of colored classes $k$ corresponds to the total number of issues $m - 1$, and the number of vertices of each color class $c$ corresponds to the number of alternatives for each issue $d$. Thus, a bound of $f(m) d^{o(m)}$ will follow. We describe $\Pi'$ below:

\begin{itemize}[leftmargin=10pt,parsep=4pt,topsep=10pt,itemsep=4pt]
\item There is a binary special issue $I_s$ with alternatives $\texttt{P}$ and $\texttt{N}$, as well as a special voter $v_s$ who approves $\texttt{P}$ and all alternatives of issues other than $I_s$, unconditionally.
\item There is one issue $I_j$ with a domain size $c$, for each color $j$ in $\Pi$. The $c$ alternatives for each issue $I_j$ correspond to the $c$ vertices that are colored under $j$ in $\Pi$.
\item For each pair of colors, we add a voter. For a pair of colors $x,y$, the voter that corresponds to these colors has the following preferences: for every edge $e = (u, v)\in \mathcal{E},$ joining a vertex $u$ of color $x$ with a vertex $v$ of color $y$, this voter approves all possible alternatives for issues other than $I_s$, and for issue $I_s$, this voter votes conditionally on issues $I_{x}$ and $I_{y}$ as follows: $\{uv: \texttt{P}\}$. 
\end{itemize}

We also set $s$, the decision variable of \cav, to $0$. In the created instance $\Pi'$, we have $\binom{k}{2} + 1$ voters and $k + 1$ issues, each with a domain size of $c$. Furthermore, the dependency graph of $v_s$ has $0$ edges, while the dependency graphs of the remaining voters each have $2$ edges, both directed towards $I_s$, resulting in a global dependency graph that is a star directed towards its center. It’s worth noting that each voter only requires a single conditional statement, which further underscores the naturalness of the reduction presented and the simplicity of the instance used.

Suppose $\Pi$ is a \texttt{yes}-instance. Then, there exists a clique on $k$ vertices such that each vertex is of a different color. We fix the following outcome for the corresponding instance $\Pi'$ of \cav:
for each issue $I_j$ corresponding to color $j$, we select the alternative that corresponds to the vertex of the clique colored under $j$. For the special issue $I_s$, we select $\texttt{P}$. 
We need to show that all voters are satisfied with respect to all issues. Trivially, the special voter $v_s$ is satisfied with the choice of $\texttt{P}$. Consider an arbitrary voter among the remaining $\binom{k}{2}$ voters, corresponding to the pair of colors $(x,y)$. By the existence of a multicolored clique, there is an edge in $\mathcal{E}$, as well as in the clique, joining a vertex of color $x$, say $u$, with a vertex of color $y$, say $v$. Consequently, this voter is satisfied with respect to $I_s$ because both $u$ and $v$, and $\texttt{P}$, are selected in the outcome. By the fact that all voters approve any alternative for issues other than $I_s,$ we have that this voter is satisfied with all $k + 1$ issues. Therefore, the selected outcome results in a total dissatisfaction of $0$ for the electorate, concluding the forward direction.

For the reverse direction, suppose $\Pi'$ is a \texttt{yes}-instance of \cav, meaning there exists an assignment satisfying all voters completely. We will show that the vertices of $\mathcal{G}$ corresponding to the alternatives of the satisfying assignment for issues other than $I_s$ form a multicolored clique in $\mathcal{G}$. The vertices corresponding to the selected alternatives for issues other than $I_s$ are exactly $k$ in number and of different colors, due to the reduction's construction. The condition $s = 0$ ensures that the chosen outcome includes $\texttt{P}$ for $I_s$, as otherwise $v_s$ would be dissatisfied. Furthermore, for a total dissatisfaction of $0$, all $\binom{k}{2}$ voters must be satisfied with respect to $I_s$ as well. Focusing on an arbitrary voter corresponding to the pair of colors $(x,y)$, satisfaction with all issues implies the existence of a pair of alternatives, say $(u, v)$, from different issues, such that the voter votes for $\{uv: \texttt{P}\}$. By construction, this vote indicates the existence of an edge $(u, v)$ in $\mathcal{E}$. Thus, for every voter, or equivalently for every pair of colors, there is an edge connecting vertices of the two colors, using only the vertices corresponding to alternatives selected in the outcome. Therefore, $\Pi$ is a \texttt{yes}-instance as well, and, consequently, an algorithm running in $d^{o(m)}\cdot |P|^{O(1)}$ for \cav\ could be used to decide $k$-\textsc{multicolored clique} in
$c^{o(k)}f(k)$ as well, which does not hold, assuming ETH.
\end{proof}

Given that both hardness results from \Cref{thm:1,thm:o(m)} apply to cases where \(\Delta>1\), we now shift our attention to the simplest case of conditional ballots (which at the same time is the most examined case in the related literature), where \(\Delta=1\), and demonstrate a negative result for those instances as well. 
On the downside, the result that follows is slightly weaker in terms of the running time bound.

\begin{theorem}
\label{thm:sqrt}
There is no $d^{o(m/\log m)}\cdot|P|^{O(1)}$-time algorithm for \cav, assuming the Exponential Time Hypothesis, even if $\Delta=1$, and in particular if the dependency graph of each voter who has conditional preferences is an (out-)star
on three vertices. \end{theorem}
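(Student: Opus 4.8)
The plan is to reduce from $3$-\textsc{sat}, which, after restricting (via the sparsification lemma) to instances with only $O(N)$ clauses, cannot be solved in $2^{o(N)}$ time under ETH. From a formula $\phi$ on $N$ variables I would build a \cav\ profile $P$ with target value $s=0$ having $m=\Theta(N)$ issues, all of constant domain size (so $d=O(1)$), satisfying $\Delta=1$, and in which the dependency graph of every conditional voter is an out-star on exactly three vertices. The issues are one binary issue $I_x$ per variable $x$, together with a constant number of constant-domain ``helper'' issues per clause; the crux is to encode each clause of $\phi$ by a single out-star voter while respecting all of these structural restrictions.

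The reason this $\Delta=1$ regime is genuinely harder to handle than the $\Delta=2$ case of \Cref{thm:o(m)} is that an out-star voter centered at $J$ with leaves $A,B$ can impose only a \emph{product-structured} constraint on the pair $(r_A,r_B)$ once $r_J$ is fixed: for each value $u\in D_J$ it prescribes approval sets $A_A(u)\subseteq D_A$ and $A_B(u)\subseteq D_B$, and is satisfied iff $r_A\in A_A(r_J)$ \emph{and} $r_B\in A_B(r_J)$. Unlike the in-stars used in \Cref{thm:o(m)}, such a voter cannot jointly ``read a pair'' of outcomes, so a disjunctive clause $C=(\ell_1\vee\ell_2\vee\ell_3)$ over variables $x_1,x_2,x_3$ cannot be enforced directly. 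I would get around this with a \emph{case-split gadget}: introduce a helper issue $J_C$ whose constant-size domain records a value of $x_1$ together with a ``witness literal'' of $C$; link $J_C$ to $I_{x_1}$ by a \emph{functional} out-star enforcing ``if $r_{x_1}$ already satisfies $\ell_1$ then $J_C$ must record $\ell_1$, and otherwise $J_C$ may record either $\ell_2$ or $\ell_3$''; and let one clause-voter be the out-star centered at $J_C$ with leaves $I_{x_2}$ and $I_{x_3}$, whose approval sets are chosen so that the literal recorded by $J_C$ is indeed made true by $r_{x_2}$ or $r_{x_3}$. Where an out-star on exactly three vertices is required but only two of them are ``active'' — as for the functional link, which uses the center $I_{x_1}$ and the single leaf $J_C$ — I pad with a dummy issue that the voter approves unconditionally, or fuse two functional links sharing the center $I_{x_1}$; one then verifies that no issue has in-degree above $1$ inside any single voter's dependency graph, so $\Delta=1$ holds.

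With the gadgets in place both directions are routine. If $\phi$ is satisfiable, set each $I_x$ from the satisfying assignment and let each $J_C$ record a literal of $C$ that this assignment makes true; then every functional link holds and every clause-voter is satisfied, so the total dissatisfaction is $0$. Conversely, given an outcome $r^*$ with $\sum_i\delta_i(r^*)=0$, every functional link being respected forces each $J_C$ to be consistent with $r_{x_1}^*$, and each clause-voter being satisfied forces the witness literal recorded by $J_C$ to be true under $r^*$, so reading off the $I_x$'s yields a satisfying assignment of $\phi$. It remains to do the bookkeeping: the instance has $m=\Theta(N)$ issues, $d=O(1)$, and $|P|=\mathrm{poly}(N)$, so a hypothetical $d^{o(m/\log m)}\cdot|P|^{O(1)}$-time algorithm for \cav\ would decide $3$-\textsc{sat} in time $2^{o(m/\log m)}\cdot\mathrm{poly}(N)=2^{o(N/\log N)}\cdot\mathrm{poly}(N)$, which is $2^{o(N)}$ since $N/\log N=o(N)$ — contradicting ETH. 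The step I expect to be the real work is the gadget design itself: coaxing out-stars, a strictly weaker device than the in-stars of \Cref{thm:o(m)}, into simulating arity-three disjunctive constraints while simultaneously keeping $\Delta=1$ and the dependency graph of every conditional voter exactly a three-vertex out-star. A marginally cleaner alternative that sidesteps the disjunction (though not the bookkeeping) is to start from $3$-\textsc{coloring}, since there the constraint ``differ from the center's value'' is already product-structured in the required sense.
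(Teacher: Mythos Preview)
Your approach is correct in outline but takes a genuinely different route from the paper. The paper reduces from $2$-\textsc{csp} and invokes Marx's lower bound (no $f(k)\cdot|\Sigma|^{o(k/\log k)}$ algorithm for $2$-\textsc{csp} with $k$ constraints over alphabet $\Sigma$): it introduces one issue per variable with domain $\Sigma$, one issue per constraint with domain $\Sigma^2$ listing all pairs, and for each satisfying pair $(a,b)$ of each constraint a single out-star voter centered at the constraint issue with leaves the two variable issues, checking that if the constraint issue shows $(a,b)$ then the variable issues show $a$ and $b$. This is gadget-free and the product structure of out-stars is exactly what consistency checking needs; the $\log m$ slack in the exponent is inherited directly from Marx's bound, and the domains are large rather than constant.

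Your route through sparsified $3$-\textsc{sat} with a witness-literal helper per clause is more elementary --- it uses only the base ETH --- at the price of the case-split engineering you describe. Two small clean-ups: the domain of $J_C$ can simply be $\{1,2,3\}$ (recording the value of $x_1$ is redundant, since the functional voter already conditions on $I_{x_1}$), and the dummy-leaf padding to force exactly three vertices per out-star should be written out explicitly. The $3$-\textsc{coloring} variant you mention at the end is indeed cleaner and avoids the witness machinery entirely, since ``differ from the center'' is already product-structured.

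More interestingly, your construction actually yields a \emph{stronger} conclusion than the theorem states. Because your instances have $d=O(1)$ and $m=\Theta(N)$, a hypothetical $d^{o(m)}\cdot|P|^{O(1)}$ algorithm --- not merely $d^{o(m/\log m)}$ --- would already give a $2^{o(N)}$ algorithm for $3$-\textsc{sat}, matching the bound of \Cref{thm:o(m)} even in the $\Delta=1$ out-star regime. Your final bookkeeping line (``$2^{o(N/\log N)}$ is $2^{o(N)}$'') is correct but undersells this: the $\log$ detour is unnecessary. The paper's $2$-\textsc{csp} route cannot do this, since the $\log k$ loss is intrinsic to Marx's lower bound.
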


\begin{proof}

We reduce from 2-\textsc{csp}, in which we are given a \textsc{csp} instance in the form of a set of variables, each taking values
over a domain $\Sigma$ and a set of $k$ constraints, with each constraint
involving $2$ variables. 
A well-known
result from \cite{Marx10} (which was recently given a more streamlined proof
\cite{SMPS24}) states that this problem cannot be solved in time
$f(k)\cdot |\Sigma|^{o(k/\log k)}$ for any computable function $f$, unless the ETH is false. Our high-level plan here is to construct a \cav\ instance $\Pi'$ from a 2-\textsc{csp}
instance with $k$ constraints $\Pi$ so that the number of issues in $\Pi$
is $m=O(k)$ and the maximum domain is $d=|\Sigma|^2$, while the reduction runs in
polynomial time. If we achieve these properties, then a \cav\ algorithm running
in time $d^{o(m/\log m)}\cdot|P|^{O(1)}$ would imply an algorithm for 2-\textsc{csp} that contradicts the result from \cite{Marx10}, falsifying the ETH.

Our construction is as follows: for each variable $v$ of $\Pi$ we construct an issue $I_v$ with $D_v=\Sigma$
and for each constraint involving variables $uv$ we construct an issue $I_{uv}$
with $D_{uv}$ being a list of all the $|\Sigma|^2$ possible outcomes of the corresponding constraint.
Observe that we have $O(k)$ issues, as we can assume that each variable of $\Pi$ belongs to at least one constraint and hence $\Pi$ has at most $2k$ variables. Furthermore $d=|\Sigma|^2$. For each constraint of $\Pi$ involving variables $u,v$ and each assignment $a,b$ to
these variables satisfying this constraint, we add in $\Pi'$ a voter whose role
is intuitively to check that if we selected $(a,b)$ as the outcome of $I_{uv}$
then we must have selected $a$ and $b$ as the outcomes of $I_u$ and $I_v$
respectively. More precisely, this voter unconditionally accepts any outcome for
all issues other than $I_u, I_v$. Regarding issues $I_u$ and $I_v$, the voter votes conditionally on the outcome of $I_{uv}$ for each of them, as follows: if the outcome of $I_{uv}$ is not
$(a,b)$ the voter accepts any outcome of $I_u, I_v$, and, otherwise, if the outcome of $I_{uv}$ is $(a,b)$, then only accepts outcome $a$
for $I_u$ and $b$ for $I_v$. We set $s=0$ and this completes the construction. Observe that the dependency graph of each voter is an (out-)star on three vertices, resulting in $\Delta=1$.

Say that $\Pi$ is a \texttt{yes}-instance. Then, we can select the corresponding outcomes for issues representing
variables and constraints and it is not hard to see that all voters are
satisfied with respect to all of the issues. Hence, $\Pi'$ is also a \texttt{yes}-instance. For the reverse direction, suppose that $\Pi'$ is a \texttt{yes}-instance, i.e., there is an outcome $r^*$ satisfying all voters with respect to all issues. From that, we can extract an assignment for $\Pi$ by considering
the outcomes of issues that correspond to the variables of $\Pi$. We claim that this assignment satisfy all the $k$ constraints of $\Pi$. If not, there should be a constraint on a pair of variables, say $u$ and $v$ that is not
satisfied by the selected assignment. But that would mean that the outcome of issue $I_{uv}$ (which
corresponds by construction to a satisfying assignment of the constraint) is
not consistent with the outcomes of $I_u, I_v$, so, then the voter we
constructed for this assignment would be dissatisfied with respect to at least one issue. This contradicts the fact that $r^*$ satisfies all voters with respect to all issues. \end{proof}

From \cite{mp21}, it is known that an algorithm with a running time of \( d^{o(tw)/\log(tw)}\cdot|P|^{O(1)} \) is impossible under the Exponential Time Hypothesis (ETH), where $tw$ is the treewidth of the global dependency graph of the given instance.
\Cref{thm:1,thm:o(m),thm:sqrt}, provide stronger bounds than this, in terms of running time. Moreover, we observe that several structural graph parameters associated with the global dependency graph (beyond treewidth) also do not yield improved running times. For instance, we consider pathwidth $(pw)$ and vertex cover number (\(vc\)) as those parameters have also significant value regarding the polynomial-time computability of \cav\ (see \Cref{subsec:pos2}) but, also, the result holds for others. The hardness with respect to these arises in a straightforward way from the fact that the number of vertices in the global dependency graph equals \(m\), and it is an upper bound for those parameters.
Thus, in terms of graph parameters, the hardness proofs from \Cref{thm:1,thm:o(m),thm:sqrt} are as strong as they could be. Furthermore, while our constructions proving hardness explicitly used the fact that $s=0$, this does not diminish the generality of our results. One could simply introduce a voter $x$ along with a set of $s$ issues $Y$, where $x$ is dissatisfied by all alternatives of issues in $Y$ and satisfied by all of the rest, while all other voters approve all alternatives for issues in $Y$; targeting for a dissatisfaction of at most $s$ using the reductions presented.

\section{Tractability Results}
\label{sec:pos}

The preceding section not only reaffirms the NP-hardness of \cav\ \cite{bl16} but it also highlights impossibilities around the existence of solutions beyond trivial methods for general instances. Despite these results, we continue being advocates of the framework due to its substantial advantages in using conditional preferences for real-world multi-issue elections. Therefore, we now turn our attention to specific cases where efficient solutions are not already deemed impossible by the negative results of \Cref{sec:hardness} and we explore two different approaches: limiting the types of ballots (\Cref{subsec:pos1}) and limiting the structure of declared dependencies (\Cref{subsec:pos2}). Each approach led to the identification of a natural and, crucially, tight restriction.

\subsection{Restricting the Ballot Types}
\label{subsec:pos1}

In this section, we focus on binary domains and, for simplicity, we will use $D_i=\{0_i,1_i\}$ for each $i\in [m]$. Even in this case, the problem remains NP-hard for $\Delta=1$ even when each voter disapproves of only one of the four possible outcomes for the two involved issues \cite{bl16}. Thus, a stronger, or at least differently flavored, restriction is necessary. We propose the restriction of group-dichotomous instances.

\begin{definition}
\label{def:monotone}
    A conditional ballot $\{\texttt{Pre}(I_j):A(I_j)\}$ of a voter $i$ regarding a binary issue $I_j$ is called \textit{group-dichotomous} if (a) in case $A(I_j)=0_j$ then it holds that $\texttt{Pre}(I_j)=(0_k)_{k\in N_i^{\unaryminus}(I_j)}$,  (b) in case $A(I_j)=1_j$ then it holds that $\texttt{Pre}(I_j)=(1_k)_{k\in N_i^{\unaryminus}(I_j)}$, and (c) in case $A(I_j)=\{0_j,1_j\}$ then it either holds that $\texttt{Pre}(I_j)=(0_k)_{k\in N_i^{\unaryminus}(I_j)}$ or that $\texttt{Pre}(I_j)=(1_k)_{k\in N_i^{\unaryminus}(I_j)}$. An instance of \cav\ is said to be \textit{group-dichotomous} if all the conditional ballots are group-dichotomous (and no restrictions are imposed on unconditional ballots).
\end{definition}

A group-dichotomous instance can arise when voters identify sets of complementary issues, i.e., issues that are desirable if implemented concurrently, but lack individual value if implemented alone. Alternatively, voters may also identify sets of issues they wish to veto together, indicating they would be satisfied if none of those issues were implemented. 
This assumption is clearer in the simplest case where $\Delta=1$. Then, group-dichotomous instances include either unconditional ballots or ballots of the form $\{0_i:\{0_j,1_j\}\}$ or $\{1_i:\{0_j,1_j\}\}$ or $\{0_i:0_j\}$ or $\{1_i:1_j\}$ (but not $\{0_i:1_j\}$ or $\{1_i:0_j\}),$ for pairs $i,j\in [m]$. 
To illustrate this in real-world scenarios, consider a voter in a participatory budgeting context who approves the construction of a park only if pedestrian infrastructure is also built around it; neither the park alone nor the pedestrian infrastructure alone satisfies her, nor does the absence of both. Alternatively, a voter voting for the formation of a committee, may wish to state that if candidate A is not included, then they also do not want candidate B included, which could be because they recognize that A and B work well together and could be more beneficial to be placed together in another committee.
Importantly, we complement our polynomial-time algorithm for those instances with an impossibility result stating that the restriction of group-dichotomous ballots cannot be dropped entirely while allowing for polynomial algorithms.

\begin{theorem}
\label{thm:monotone}
Given an instance of \cav\ on binary issues, if the instance is group-dichotomous then \cav\ can be optimally solved, otherwise it is NP-hard, even for $\Delta=1$.
\end{theorem}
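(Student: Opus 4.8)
The ``otherwise'' direction requires nothing new: it is exactly the NP-hardness result of \cite{bl16} recalled just before the statement, whose hard instances are not group-dichotomous (a ballot disapproving only, say, $(0_i,0_j)$, or the ``disagreement'' ballot with statements $\{0_i:1_j\}$ and $\{1_i:0_j\}$, violates \Cref{def:monotone}). So the real content is the algorithm for group-dichotomous instances, and the plan is to obtain it by recasting \cav\ as submodular function minimization. First I would identify an outcome with the set $S\subseteq I$ of issues receiving value $1$, so that $\{0,1\}^m$ becomes the Boolean lattice $2^I$, and write the total dissatisfaction as $\delta(S)=\sum_{i\in[n]}\sum_{j\in[m]}p_{ij}(S)$, where $p_{ij}(S)\in\{0,1\}$ indicates that voter $i$ is dissatisfied with $I_j$ under $S$; crucially, $p_{ij}$ depends only on $S$ restricted to $N^{\unaryminus}_{i}(I_j)\cup\{I_j\}$. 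Since \cav\ asks whether $\min_S\delta(S)\le s$, it suffices to minimize $\delta$ in polynomial time, and I would do this by showing that every $p_{ij}$ — hence their sum — is submodular.

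The crux is this submodularity claim. Fix $i$ and $I_j$ and let $Z_{ij}=\{S:p_{ij}(S)=0\}$ be the set of outcomes satisfying $i$ with respect to $I_j$. If $I_j$ is unconditional for $i$, or $i$ submitted no conditional statement for $I_j$ (i.e.\ is a priori dissatisfied with it), then $p_{ij}$ depends on at most one variable and submodularity is trivial. Otherwise, by \Cref{def:monotone} the set $C_{ij}$ is a nonempty subset of the at most four statements with all-$0$ or all-$1$ premise on $N^{\unaryminus}_{i}(I_j)$, and each of them contributes to $Z_{ij}$ either a principal down-set $\{S:S\le a\}$ (the statements whose approval set forces $N^{\unaryminus}_{i}(I_j)$, and possibly $I_j$, to $0$) or a principal up-set $\{S:S\ge b\}$ (those forcing them to $1$). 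Because the two possible down-sets are nested, as are the two possible up-sets, $Z_{ij}$ is a single principal down-set, a single principal up-set, or a union $(\downarrow\! a)\cup(\uparrow\! b)$ of one of each. Next I would observe that for any principal down-set $\downarrow\! a$ and up-set $\uparrow\! b$ the union $(\downarrow\! a)\cup(\uparrow\! b)$ is closed under $\wedge$ and $\vee$ — the only nonobvious case is $x\le a$ and $y\ge b$, where $x\wedge y\le a$ and $x\vee y\ge b$ — so $Z_{ij}$ is a sublattice of $2^I$; and, finally, that $p_{ij}=\mathbf 1_{\overline{Z_{ij}}}$ is submodular, since a violation $p_{ij}(A)+p_{ij}(B)<p_{ij}(A\cup B)+p_{ij}(A\cap B)$ would force $A,B\in Z_{ij}$ yet $A\cup B$ or $A\cap B$ outside $Z_{ij}$, contradicting sublattice closure.

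With the claim in hand, $\delta=\sum_{i,j}p_{ij}$ is submodular over $2^I$; each $p_{ij}(S)$ is evaluable in time polynomial in $|P|$, so $\delta$ can be minimized in polynomial time by any standard submodular-function-minimization procedure, and comparing the optimum with $s$ decides \cav. (When $\Delta=1$ every $p_{ij}$ has arity at most two, so one may instead invoke the classical equivalence between minimizing a sum of submodular functions of pairs of Boolean variables and computing a minimum $s$-$t$ cut, which yields a more elementary and faster algorithm.) I expect the main obstacle to be precisely the submodularity lemma: the point to extract is that group-dichotomy is exactly the hypothesis forcing every per-issue satisfaction set to be a principal down-set unioned with a principal up-set — hence a sublattice — thereby excluding the sole obstruction to submodularity of a function of a pair of Boolean variables, namely the disagreement/XOR penalty that a non-group-dichotomous ballot such as $\{0_i:1_j\},\{1_i:0_j\}$ produces. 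Turning this into a rigorous argument is a careful but routine case analysis over the at most four statements a group-dichotomous ballot may list per issue, together with the two short lattice-theoretic observations above.
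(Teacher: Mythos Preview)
Your tractability argument is correct in outline and takes a genuinely different route from the paper. The paper translates each voter--issue pair into a \textsc{min-csp} constraint and observes that group-dichotomy forces every constraint to be a DNF with at most two terms, one purely positive and one purely negative; this is the ``2-monotone'' class in the Khanna--Sudan--Trevisan--Williamson classification, so \textsc{min-csp} reduces to \textsc{min-cut}. You reach the same endpoint (and even note the min-cut connection for $\Delta=1$) via submodular function minimization, bypassing the \textsc{csp} detour. The two are morally the same --- ``union of a principal down-set and a principal up-set'' \emph{is} the 2-monotone condition --- but your formulation is more self-contained and works uniformly for all $\Delta$ without appealing to an external dichotomy theorem.

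One genuine gap: the sentence ``a violation \ldots would force $A,B\in Z_{ij}$'' is false as written. A violation with $p_{ij}(A)+p_{ij}(B)=1$ and $p_{ij}(A\cup B)+p_{ij}(A\cap B)=2$ has only one of $A,B$ in $Z_{ij}$, and sublattice closure alone says nothing there (take $Z=\{\{1\}\}\subseteq 2^{\{1,2\}}$, $A=\{1\}$, $B=\{2\}$: then $p$ is the indicator of the complement of a sublattice yet fails submodularity). What actually saves you is the specific structure you already identified: if $A\in\mathord{\downarrow} a$ then $A\cap B\in\mathord{\downarrow} a$, and if $A\in\mathord{\uparrow} b$ then $A\cup B\in\mathord{\uparrow} b$, so one of $A\cap B,A\cup B$ always lands in $Z_{ij}$. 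This is the ``routine case analysis'' you anticipate, but the sublattice shortcut is a red herring and should be dropped.

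On the hardness direction, the paper claims more than you recover from \cite{bl16}. It invokes the \textsc{min-csp} dichotomy of \cite{khanna2001approximability} (via the reduction of \cite{mp21}) to argue that \emph{any} ballot type violating \Cref{def:monotone} already yields an NP-hard constraint language, so group-dichotomy is the exact boundary. Your reading --- that general binary \cav\ is hard and its hard instances happen not to be group-dichotomous --- establishes only that \emph{some} enlargement is hard, not that \emph{every} enlargement is; this is weaker than the dichotomy the theorem intends.
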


\begin{proof}
For the tractability result we will first reduce a group-dichotomous instance $\Pi$ of \cav\ to an instance $\Pi'$ of \textsc{minimum constraint satisfaction (min-csp)}. In the \textsc{min-csp} problem we are given a collection of constraints on
(overlapping) sets of binary variables and the goal is to find an assignment of values to the variables that minimize the number of unsatisfied constraints. Say that $\Pi$ is an instance on $n$ voters and a set $I$ of $m$ (binary) issues, then we will create an instance $\Pi$ on $m$ (binary) variables with at most $nm$ constraints. Say that $x_j$ is the variable of $\Pi'$ that corresponds to issue $I_j$ of $\Pi$.
Fix a voter $i\in [n]$ and an issue $I_j\in I$. Say that $|N_i^{\unaryminus}(I_j)|=0$. We can assume that $i$ approves only one out of the two possible alternatives for $I_j$; or otherwise the considered pair will have the exact same contribution on any feasible solution. 
If $i$ votes for $0_j$ we add $\overline{x_j}$ as a constraint in $\Pi$ and if $i$ votes for $1_j$ we add $x_j$ as a constraint in $\Pi$. 
Now say that $|N_i^{\unaryminus}(I_j)|\geq 1$ and that the ballot of $i$ regarding issue
$I_j$ contains expressions of the form $\{\texttt{Pre}(I_j):A(I_j)\}$. 
Firstly say that $i$ approves only one option among $0_j$ and $1_j$ (conditioned on the outcomes of the issues in $N^{\unaryminus}_i(I_j)$) and not both.
If $A(I_j)=0_j$ then, by the fact that all conditional ballots are group-dichotomous, it holds that $\texttt{Pre}(I_j)=(0_k)_{k\in N^{\unaryminus}_i(I_j)}.$ We add the following constraint in $\Pi'$: $\overline{x_j}\wedge \bigwedge_{k\in N^{\unaryminus}_i(I_j)}\overline{x_k}$. The case of $A(I_j)=1_j$ is similar. If $i$ approves both $0_j$ and $1_j$ then, for each $j\in [m]$, we set $y_j:=\overline{x_j},$ if $\texttt{Pre}(I_j)=(0_k)_{k\in N^{\unaryminus}_i(I_j)}$ and $y_j:=x_j,$ if $\texttt{Pre}(I_j)=(1_k)_{k\in N^{\unaryminus}_i(I_j)},$ and we add the constraint: $(\overline{x_j}\wedge \bigwedge_{k\in N^{\unaryminus}_i(I_j)}y_k) \vee (x_j\wedge \bigwedge_{k\in N^{\unaryminus}_i(I_j)}y_k)\equiv \bigwedge_{k\in N^{\unaryminus}_i(I_j)}y_k$. Any assignment on the $m$ variables of $\Pi'$ that satisfies all but $s$ of the described constraints induces an assignment $r^*$ on the issues of $\Pi$ that achieves $\sum_{i\in[n]}\delta_i(r^*)= s$ as it dissatisfies exactly $s$ pairs of voters/issues and vice-versa. 
Each of the described constraints is expressed as a DNF-formula with at most two terms: one containing only positive literals and the other containing only negative literals. Such an instance of \textsc{min-csp} can be easily reduced to \textsc{min-cut}, hence being polynomially solvable. The presented reduction to \textsc{min-csp} is a generalization of the one in \cite{mp21}, for $\Delta>1$, with the added consideration of the format of the created constraints and utilizing the positive result from \cite{khanna1996optimization}.

For the intractability result, suppose that we can solve \cav\ in polynomial time for an instance that is not group-dichotomous. By following the reduction proving hardness from \cite{mp21}, we could solve the corresponding \textsc{min-csp} instance. But this instance would also have constraints that are not expressible as the aforementioned DNF-formulas. According to \cite{khanna2001approximability}, unless the constraints are as described, 
\textsc{min-csp} cannot be solved in polynomial time. 
\end{proof}

\subsection{Restricting the Structure of Dependencies}
\label{subsec:pos2}
This section focuses on restrictions on the dependencies declared by the voters that can result in polynomial-time algorithms. As detailed in \Cref{subsec:format}, these dependencies are illustrated by the dependency graph cast by each voter and, in turn, by the global dependency graph of the instance. The most definitive parameter for these graphs is the number of their vertices, or $m$, the number of issues in the given instance. For instances with relatively few issues, or, in other words, dependency graphs of few vertices, \cav\ is trivially polynomially solvable (cf. \Cref{obs:brute-force}). Here, we examine what lighter restrictions could be posed to solve \cav\ in polynomial time, for elections over a large number of issues. We focus on the natural case of $\Delta=1$ (again, as in \Cref{subsec:pos1}, the result admits a certain generalization but the necessary condition will be less natural in that case). Under the assumption of $\Delta=1$, it is known (see \cite{mp21}) that treewidth is a pivotal parameter that characterizes polynomial computation of \cav:

\begin{theorem}[\cite{mp21}]
    \label{characterization}
    Let $\mathbf{G}$ be a recursively enumerable (e.g., decidable) class of graphs, and let \cav$(\mathbf{G})$ be the class of instances of \cav\
with a global dependency graph that belongs to $\mathbf{G}$, and with
$\Delta = 1$. Assuming FPT$\neq$W[1], there is a polynomial-time algorithm for \cav$(\mathbf{G})$ if and only if every graph
in $\mathbf{G}$ has constant treewidth.
\end{theorem}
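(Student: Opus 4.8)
The plan is to treat the two directions of the equivalence separately, noting that only the ``only if'' direction needs the assumption $\mathrm{FPT}\neq\mathrm{W}[1]$.

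For the ``if'' direction I would give the standard bounded-treewidth dynamic program. Suppose every graph in $\mathbf{G}$ has treewidth at most a constant $w$; since the algorithm may depend on the fixed class $\mathbf{G}$, it may hard-code $w$ (or just compute treewidth exactly, which is decidable). Given an instance of \cav$(\mathbf{G})$ with global dependency graph $G\in\mathbf{G}$ and $\Delta=1$, I would first compute a tree decomposition of $G$ of width $O(w)$ in polynomial time (Bodlaender's algorithm, or any constant-factor treewidth approximation, runs in polynomial time for fixed $w$). The key observation is that with $\Delta=1$ the indicator ``voter $i$ is dissatisfied with respect to issue $I_j$'' depends only on the outcome restricted to $\{I_j\}\cup N^{\unaryminus}_i(I_j)$, a set of at most two issues which, when of size two, is an edge of $G$; in either case some bag of the decomposition contains this whole scope, and I charge that term of the objective $\sum_i\delta_i(\cdot)=\sum_i\sum_j[\text{$i$ dissatisfied w.r.t.\ }I_j]$ to such a bag. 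A routine bottom-up DP over a nice tree decomposition then computes, for each node and each assignment to the (at most $O(w)$) issues in its bag, the minimum total dissatisfaction from the charges in the subtree below; each table has at most $d^{O(w)}$ entries, there are $O(m)$ nodes, and the introduce/forget/join transitions are polynomial, so the procedure runs in $d^{O(w)}\cdot|P|^{O(1)}$ time, which is polynomial since $w$ is constant. Comparing the optimum with $s$ (plus the usual traceback) finishes this, unconditional, direction.

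For the ``only if'' direction I would argue the contrapositive: if $\mathbf{G}$ has \emph{unbounded} treewidth then \cav$(\mathbf{G})$ is not polynomial-time solvable unless FPT $=$ W[1]. I would reduce from \textsc{Grid Tiling} --- given $k^2$ nonempty sets $S_{i,j}\subseteq[N]\times[N]$, choose $s_{i,j}\in S_{i,j}$ so that horizontally adjacent choices agree in the first coordinate and vertically adjacent ones agree in the second --- which is W[1]-hard parameterized by $k$ (see, e.g., \cite{parameterized_book}). By the Excluded Grid Theorem there is a computable $f$ such that every graph of treewidth at least $f(k)$ contains the $k\times k$ grid as a minor; since $\mathbf{G}$ is recursively enumerable and of unbounded treewidth, I can, using $k$ alone, enumerate $\mathbf{G}$ and compute treewidths until I hit some $G_k\in\mathbf{G}$ with treewidth at least $f(k)$ --- this search halts, in time bounded by a function of $k$. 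Fixing a $k\times k$ grid-minor model of $G_k$ yields pairwise-disjoint connected branch sets $B_{i,j}\subseteq V(G_k)$ and, for each pair of grid-adjacent cells, an edge of $G_k$ between the corresponding branch sets. The \cav\ instance then has the vertices of $G_k$ as issues; every issue of $B_{i,j}$ is given domain $S_{i,j}$ (issues in no branch set get a one-element domain). For a fixed spanning tree of each $B_{i,j}$ and each of its edges $\{I_a,I_b\}$ I add an ``equality'' voter with the single dependency arc $(I_a,I_b)$ and the conditional statements $\{\sigma:\sigma\}$ for all $\sigma\in S_{i,j}$; for each grid-adjacent pair of cells I add one ``agreement'' voter on the guaranteed inter-branch-set arc whose conditional statements, for each value of the tail issue, approve exactly the values of the head issue agreeing with it in the coordinate relevant to that adjacency; and for every remaining edge of $G_k$ I add a voter that conditionally approves everything, so that the global dependency graph is exactly $G_k\in\mathbf{G}$ and $\Delta=1$. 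With $s=0$: any outcome with total dissatisfaction $0$ must, by the equality voters, assign a single common value $s_{i,j}\in S_{i,j}$ to all issues of $B_{i,j}$, and the agreement voters then force $(s_{i,j})$ to be a valid grid tiling, and conversely --- so the reduction is correct. Since the instance has size $g(k)\cdot\mathrm{poly}(N)$, a polynomial-time algorithm for \cav$(\mathbf{G})$ would decide \textsc{Grid Tiling} in FPT time, contradicting FPT $\neq$ W[1].

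The ``if'' direction is routine; its one real point is that $\Delta=1$ confines each term of the objective to an edge of the global dependency graph, which is exactly what breaks down for $\Delta\ge 2$ (cf.\ \Cref{thm:o(m)}, whose hard instances already have treewidth $1$). I expect the main obstacle to be in the ``only if'' direction, and specifically in the fact that unbounded treewidth only yields a grid \emph{minor}, not a grid subgraph: one has to transport the ``value of a cell'' through an entire connected branch set by a chain of equality constraints while never exceeding in-degree $1$, and one has to make the whole reduction effective and FPT --- in particular the act of locating a suitable $G_k\in\mathbf{G}$ --- which is precisely where recursive enumerability of $\mathbf{G}$ and a computable version of the Excluded Grid Theorem are used.
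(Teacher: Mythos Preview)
This theorem is not proved in the present paper; it is quoted from \cite{mp21} and used as a black box in \Cref{subsec:pos2}, so there is no in-paper proof to compare against. Your outline is nonetheless correct on both directions. For the ``if'' direction, the observation that with $\Delta=1$ every dissatisfaction term lives on a single edge of the global dependency graph is exactly the point, and the tree-decomposition DP is standard. For the ``only if'' direction, the Grid-Tiling-via-Excluded-Grid argument you sketch is the usual template for such dichotomies; the delicate issues you flag --- propagating a common value through a branch set using single-arc equality voters, padding the remaining edges of $G_k$ with always-satisfied voters so that the global dependency graph is exactly $G_k\in\mathbf{G}$, and using recursive enumerability of $\mathbf{G}$ together with decidability of treewidth to effectively locate $G_k$ in time depending only on $k$ --- are all handled correctly. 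Judging from hints elsewhere in the paper (the \textsc{min-csp} reduction in the proof of \Cref{thm:monotone} is described as inherited from \cite{mp21}), the original argument most likely goes through an equivalence between \cav\ with $\Delta=1$ and binary \csps\ and then invokes the known treewidth dichotomy for \csps, whereas your sketch re-derives that dichotomy directly inside \cav; the two routes are close cousins.
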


However, notice that the restriction on the treewidth that appears in \Cref{characterization} (originated from~\cite{mp21}) applies to the global dependency graph, meaning that in order to achieve a polynomial-time algorithm, we must ensure that the union of voters' dependency graphs has a certain structural guarantee. Enforcing this in practice is challenging---unless we a priori fix a common dependency graph for all voters (with bounded treewidth), nevertheless, this would severely limit the power of expressiveness of voters' ballots,
thereby weakening the framework.

As a solution, our focus here is on proposing a \textit{local-restriction}: a specific property of the dependency graph of each voter (and not of their union). More precisely, we aim at the identification of a conveniently describable restriction that will be sufficient to ensure that \cav\ can be efficiently solved. One straightforward idea would be to explore whether requiring each voter to submit a graph with bounded treewidth would suffice towards applying the positive result from \Cref{characterization}. This is not the case. The hardness result from \cite{bl16} shows that even with voters' dependency graphs of treewidth $1$ (a single edge per voter), the problem is NP-hard. 
Moreover, even with just two voters, 
the positive result of \Cref{characterization} cannot be applied
under the restriction of bounded treewidth (or even pathwidth) per voter.

\begin{observation}
\label{thm:paths}
Bounding the pathwidth of voters' dependency graphs by a constant doesn’t imply constant treewidth for the global dependency graph, even for 2 voters.
\end{observation}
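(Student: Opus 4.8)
The plan is to realise, for every integer $n$, the $n\times n$ grid graph as the global dependency graph of a two–voter instance in which each voter's dependency graph is a vertex-disjoint union of paths, hence of pathwidth $1$. Since the treewidth of the $n\times n$ grid equals $n$ and $n$ may be taken arbitrarily large, this shows that a constant bound on the per-voter pathwidth imposes no bound whatsoever on the treewidth of the global dependency graph, already with two voters.

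In more detail, I would index the $m=n^2$ issues as $I_{(a,b)}$ for $a,b\in[n]$, picturing them as the cells of an $n\times n$ array. The dependency graph $G_1$ of the first voter consists, for each row $a\in[n]$, of the directed path $I_{(a,1)}\to I_{(a,2)}\to\cdots\to I_{(a,n)}$, and the dependency graph $G_2$ of the second voter consists, for each column $b\in[n]$, of the directed path $I_{(1,b)}\to I_{(2,b)}\to\cdots\to I_{(n,b)}$. Each $G_i$ is thus a disjoint union of $n$ directed paths; in particular every issue has in-degree at most $1$, so the construction stays within the $\Delta=1$ regime of this subsection, and the underlying undirected graph of each $G_i$ — a disjoint union of paths — has pathwidth $1$, a path decomposition being obtained by concatenating path decompositions of the components. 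These dependency graphs are realised by genuine conditional ballots in the obvious way: let both voters approve every alternative of every issue, and for each issue with a nonempty in-neighbourhood attach a conditional statement whose premise ranges over the value of its single in-neighbour; the approval sets are immaterial for this purely structural statement.

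Finally, one observes that the global dependency graph, i.e.\ the undirected graph on $I$ with edge set $E_1\cup E_2$, is exactly the $n\times n$ grid: each grid edge joins two cells sharing a row or a column, and each such edge is contributed by precisely one of $G_1,G_2$. As the treewidth of the $n\times n$ grid is $n$, taking $n=c+2$ for any constant $c$ produces a two-voter instance in which both dependency graphs have pathwidth $1$ while the global dependency graph has treewidth $n>c$. I do not expect a real obstacle here: this is the textbook edge-partition of a grid into two linear forests, and the only external facts used — that a disjoint union of paths has pathwidth $1$ and that the $n\times n$ grid has treewidth $n$ — are classical. The only steps requiring a line of care are verifying that the chosen path orientations keep $\Delta=1$ and that $E_1\cup E_2$ is precisely the grid's edge set.
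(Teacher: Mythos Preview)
Your proposal is correct and is essentially the same construction as the paper's: both build an $n\times n$ grid as the global dependency graph by having one voter cast the $n$ row-paths and the other the $n$ column-paths, each voter's graph then being a disjoint union of paths of pathwidth~$1$. The only differences are cosmetic (you use a double index $I_{(a,b)}$ where the paper linearises to $\{1,\ldots,\rho^2\}$, and you spell out the orientation keeping $\Delta=1$ and how to realise the graphs by ballots, details the paper leaves implicit).
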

\begin{proof}
Suppose each voter casts a dependency graph that consists of $\rho$ paths, each of length $\rho,$ where $\rho$ equals $\sqrt{m}$. Specifically, the first voter casts paths along the sets of issues $\{1, 2, \ldots, \rho\}$, $\{\rho + 1, \rho + 2, \ldots, 2\rho\}$, and so on, up to $\{(\rho-1)\rho + 1, (\rho-1)\rho + 2, \ldots, \rho^2\}$. The second voter casts paths along the sets of issues $\{1, \rho + 1, 2\rho + 1, \ldots, (\rho-1)\rho + 1\}$, $\{2, \rho + 2, 2\rho + 2, \ldots, (\rho-1)\rho + 2\}$ and so on, up to $\{\rho, 2\rho, 3\rho, \ldots, \rho^2\}$. The pathwidth (and in turn the treewidth) of their dependency graphs equals $1$. The resulting global dependency graph forms a $ \sqrt{m} \times \sqrt{m}$ grid and the treewidth of such grids is non-constant \cite{robertson1986graph}. 
\end{proof}

Therefore, 
stricter restrictions are necessary, even for the case of two voters, if one wishes to solve \cav\ by utilizing the algorithm from \cite{mp21}. We identify the bounded \textit{vertex cover number} of the voters' dependency graphs as the necessary and sufficient condition to ensure bounded treewidth of the global dependency graph, thus enabling polynomial-time algorithms, at least for the case of a constant number of voters. The vertex cover number of a graph is a natural and well studied graph parameter in parameterized algorithmics (cf.~\cite{fellows2008graph,fiala2011parameterized,eiben2018structural,lampis2012algorithmic,fellows2009complexity}), but, notably, our work is one of the very few in the voting literature to explicitly consider this as a refinement of treewidth, aiming to provide positive results for cases where the treewidth restriction is not well suited. Additionally, it is a parameter that is one of the largest, right after the number of vertices, among several that are incomparable with each other. This makes it a reasonable restriction when we wish to avoid limiting the number of issues in \cav. Importantly, it is easy to define for a broad audience, making it feasible to enforce voting with such a restriction in mind; unlike treewidth. Finally, determining whether a voter's graph meets the criterion of having a bounded vertex cover number is straightforward, simplifying implementation and evaluation of the framework and the rule respectively.

\begin{theorem}
\label{thm:VC}  
If the vertex cover number of voters' dependency graphs is bounded by a constant and $\Delta=1,$ then \cav\ can be optimally solved for a constant number of voters. 
\end{theorem}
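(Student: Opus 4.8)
The plan is to deduce the statement from the positive direction of \Cref{characterization}. Since $\Delta=1$ is assumed, it suffices to exhibit a constant $k$ (depending only on the promised bounds) such that the global dependency graph of every instance under consideration has treewidth at most $k$. Granting this, I would take $\mathbf{G}$ to be the class of all graphs whose vertex cover number is at most $k$; this class is decidable, every graph in it has treewidth at most $k$, and all our instances lie in \cav$(\mathbf{G})$, so \Cref{characterization} immediately supplies a polynomial-time algorithm.

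The crux is the observation that, in sharp contrast to treewidth — whose failure to be controlled under unions is exactly what \Cref{thm:paths} records — vertex cover number \emph{is} well behaved under taking the union of edge sets on a common vertex set. Concretely, let $n$ be the (constant) number of voters and let $\tau$ bound the vertex cover number of each voter's dependency graph. For each voter $i$ I would fix a vertex cover $C_i\subseteq I$ of the underlying undirected graph of $G_i$ with $|C_i|\le\tau$, and set $C:=\bigcup_{i\in[n]}C_i$, so $|C|\le n\tau$. By the definition of the global dependency graph, every one of its edges comes from some $E_i$ and hence has an endpoint in $C_i\subseteq C$; thus $C$ is a vertex cover of the global dependency graph, whose vertex cover number is therefore at most $n\tau$. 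Combining this with the standard fact that a graph with vertex cover number $k$ has treewidth at most $k$ (place a minimum vertex cover in every bag and append the remaining, pairwise non-adjacent, vertices one per bag) gives treewidth at most $n\tau=O(1)$, and \Cref{characterization} finishes the argument.

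I do not expect a genuine obstacle here once the correct parameter is isolated; the content of the theorem is really the identification of vertex cover number (rather than, say, treewidth or pathwidth per voter) as the right local restriction. The one point worth flagging is precisely \emph{where} constancy of the number of voters enters: it is exactly what keeps $|C|=\big|\bigcup_{i}C_i\big|\le n\tau$ — and hence the treewidth of the global dependency graph — bounded. If $n$ were allowed to grow with $m$, the union $\bigcup_i C_i$ could already have size $\Theta(m)$, so the global dependency graph need not have bounded treewidth and this route through \Cref{characterization} would break down; handling that regime, if possible at all, would require a different argument, which I would not attempt here.
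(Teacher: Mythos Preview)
Your argument is correct and, in fact, a bit cleaner than the paper's. Both proofs reduce to the positive direction of \Cref{characterization} by showing that the global dependency graph has bounded treewidth; the difference is in how that bound is obtained. The paper proceeds iteratively: fix a tree decomposition of one voter's graph (bounded width since its vertex cover number is bounded), then augment every bag with a vertex cover of the next voter's graph, and repeat for each remaining voter. You instead bound the vertex cover number of the global graph directly, observing that the union $C=\bigcup_i C_i$ of individual vertex covers is itself a vertex cover of the union graph, of size at most $n\tau$, and then pass to treewidth via the standard inequality $tw\le vc$. Your route is more elementary and makes the dependence on $n$ and $\tau$ fully explicit in one line; the paper's iterative construction, on the other hand, isolates the slightly more general step ``bounded-treewidth graph $\cup$ bounded-vertex-cover graph has bounded treewidth'', which could in principle be reused in asymmetric situations (e.g., one distinguished voter with only a treewidth bound). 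For the theorem as stated, both yield the same conclusion with the same quantitative bound.
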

\begin{proof}
Let us first consider two voters, each with a dependency graph of bounded vertex cover number. We call $C$ the vertex cover of the dependency graph of the first voter. The dependency graph of the second voter has bounded treewidth (as a consequence of the fact that it has a bounded vertex cover number), which implies it has a tree decomposition of bounded width, say $\mathcal{T}$. We now focus on their global dependency graph and we create a tree decomposition for it as follows: we start with the decomposition $\mathcal{T}$ and we augment each of its bags by adding the vertex cover $C$. Since the size of each bag was constant to begin with, adding a constant number of vertices (from \( C \)) keeps the bag size constant. Therefore, the global dependency graph, in the case of two voters, has bounded treewidth.

To prove the statement for any constant number of voters we can continue this process by iteratively unioning the emerged graph with the dependency graph of each subsequent voter. This procedure corresponds to forming the union of, once again, a graph of bounded treewidth with a graph of bounded vertex cover.
Doing so for a constant number of iterations proves that the union of the dependency graphs of these voters results in a graph with bounded treewidth. The efficient algorithm from \Cref{characterization} can then be used.
\end{proof}

Notably, \Cref{thm:VC} works not only for a constant number of voters but also when the set of distinct dependency graphs cast by the voters are constantly many, regardless of the number of the voters. On the contrary, it cannot be extended to handle a non-constant number of voters, as this would contradict the hardness established in \cite{bl16}, where each voter had, actually, a dependency graph of bounded vertex cover number. 

A slight modification in the proof of \Cref{thm:paths} (splitting each voter $i$ into two, with one taking the even edges and the other taking the odd edges of each path in $i$'s dependency graph) shows that even if the cast graphs are matchings, the polynomial algorithm from \Cref{characterization} cannot be used. To exclude matchings, one would need to bound the vertex cover number of the graph, indicating tightness of \Cref{thm:VC} with respect to applying the only known in the prior literature algorithm for \cav.

We conclude by referring to concise families of instances that satisfy the restrictions stated in \Cref{thm:VC} and could naturally occur in elections with interdependent issues. One example concerns instances with a constant number of dependencies (i.e., edges in each voter's dependency graph), along with an unrestricted number of unconditional ballots per voter. The most intuitive family of instances for which the theorem applies involves those with a constant number of out-stars per voter. These dependency graphs could arise in practice when each voter specifies a few complex and critical issues, significantly impacting a broader set of other issues. Realistically, this process could be implemented through the following simple steps: First, a voter $i$ identifies a small set of projects or candidates $S_i$ that are important to her and she is knowledgeable about. Next, she specifies mutually disjoint sets of other projects or candidates that are influenced by her decisions regarding issues in $S_i$. Finally, for some alternatives of the issues in $S_i$, the voter indicates their preferred alternatives for the corresponding issues in the selected sets from the previous step (voting unconditionally for any other issue). The mutual disjointness naturally appears in real-life scenarios when projects are partitioned into groups, with those in one group not affecting those in other groups. For example, in a participatory budgeting scenario, this could correspond to projects in different neighborhoods, where voters do not relate issues in one district to those in another, or to projects of different types that voters do not associate with one another, or to projects planned for different time periods. Yet, notably, such a process, while sufficient to create the structure of out-stars, is not necessary for a bounded vertex cover number. 
% Specifically, as long as the central issues are of a bounded number, the issues they affect could also overlap, and still resulting in a tractable instance.

\section{Conclusions and Future Work}

This work examines the computational complexity of computing the outcome of Conditional Minisum Approval Voting rule for multi-issue elections on interdependent issues, aiming to identify when and how it can be practically applied. The first part of the work highlights a fundamental challenge: while the setting allows voters to express nuanced preferences, this flexibility leads to severe computational intractability, specifically demonstrating that the relevant problem is prohibitively hard, even for restricted instances. In response, the second part introduces practical limitations on voter ballots and dependency relationships that achieve to ensure efficient computation of outcomes without significantly sacrificing expressivity, making \cav\ viable for real-world use. Overall, our work effectively and completely resolves the question of efficient computation of the examined rule, paving the way for adapting and analyzing further classical voting rules to combinatorial settings. This approach need not be confined to multi-issue contexts but could extend to committee elections or participatory budgeting. Beyond this, we see two main research directions in voting under combinatorial preferences: (a) examining proportionality guarantees in committee elections and participatory budgeting frameworks where voters’ preferences are interdependent over candidates, and (b) collecting suitable real-world data and performing experimental evaluations.

\vspace{1.2cm}

\noindent \textbf{Acknowledgments.}
G.~Amanatidis and E.~Markakis were partially supported by project MIS 5154714 of the National Recovery and Resilience Plan Greece 2.0 funded by the European Union under the NextGenerationEU Program. M.~Lampis was partially supported by ANR project ANR-21-CE48-0022 (S-EX-AP-PE-AL).
G.~Papasotiropoulos is supported by the European Union (ERC, PRO-DEMOCRATIC, 101076570). Views and opinions expressed are however those of the author(s) only and do not necessarily reflect those of the European Union or the European Research Council. Neither the European Union nor the granting authority can be held responsible for them.
\begin{figure}[h!]
\centering \includegraphics[width=0.45\linewidth]{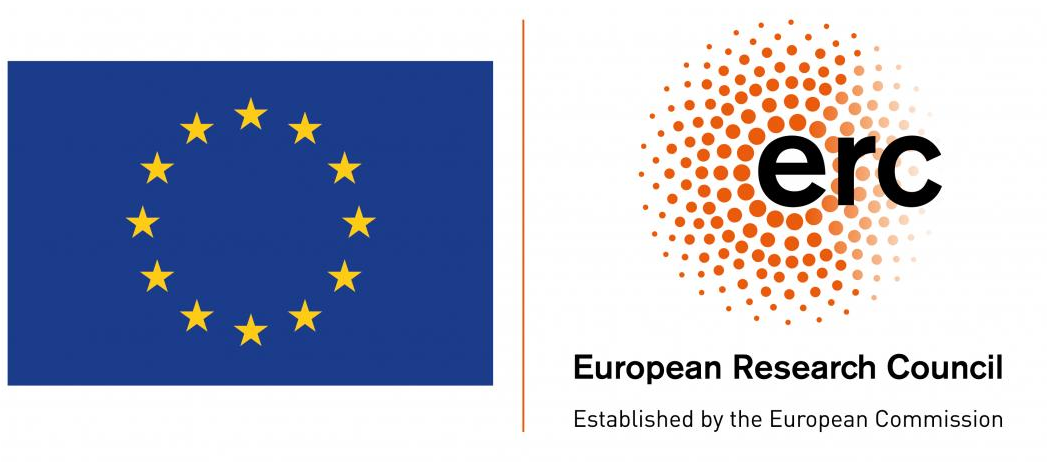}
\end{figure}

{\normalsize{
\bibliographystyle{alpha}
\bibliography{sample}
}}

\end{document}